\newtheorem{theorem}{Theorem}
\newtheorem{lemma}{Lemma}
\theoremstyle{definition}
\newtheorem{remark}{Remark}
\newcommand{\Sigmab}{{\boldsymbol\Sigma}}
\newcommand{\I}{{\boldsymbol I}}
\newcommand{\0}{{\boldsymbol 0}}
\newcommand{\A}{{\boldsymbol A}}
\newcommand{\B}{{\boldsymbol B}}
\newcommand{\Cov}{{\mbox{Cov}}}
\newcommand{\itemEq}[1]{%
        \begingroup%
        \setlength{\abovedisplayskip}{0pt}%
        \setlength{\belowdisplayskip}{0pt}%
        \parbox[c]{\linewidth}{\begin{flalign}#1&&\end{flalign}}%
        \endgroup}
\newcommand{\tr}{\mbox{tr}}
\begin{document}


\renewcommand{\baselinestretch}{1.2}

\markright{ \hbox{\footnotesize\rm 
}\hfill\\[-13pt]
\hbox{\footnotesize\rm
}\hfill }

\markboth{\hfill{\footnotesize\rm Tung-Lung Wu AND Ping Li} \hfill}
{\hfill {\footnotesize\rm Tests for High-Dimensional Covariance Matrices Using Random Matrix Projection} \hfill}

\renewcommand{\thefootnote}{}
$\ $\par


\fontsize{10.95}{14pt plus.8pt minus .6pt}\selectfont
\vspace{0.8pc}
\centerline{\LARGE\bf Tests for High-Dimensional Covariance Matrices}
 \vspace{2pt}
 \centerline{\LARGE\bf  Using Random Matrix Projection	}
\vspace{.4cm}
\centerline{Tung-Lung Wu and Ping Li}
\vspace{.4cm}
\centerline{\it Department of Statistics and Biostatistics, Department of Computer Science, Rutgers University}
\vspace{.55cm}
\fontsize{9}{11.5pt plus.8pt minus .6pt}\selectfont


\begin{quotation}
\noindent {\it Abstract:}
The classic likelihood ratio test for testing the equality of two  covariance matrices breakdowns due to the singularity of the sample covariance matrices when the data dimension $p$ is larger than the sample size $n$. In this paper, we present a conceptually simple method using random projection to project the data onto the one-dimensional random subspace so that the conventional methods can be applied.  Both one-sample and two-sample tests for high-dimensional covariance matrices are studied.  Asymptotic results are established and numerical results are given to compare our method with state-of-the-art methods in the literature.  \par


\vspace{9pt}
\noindent {\it Key words and phrases:} Covariance matrices, High-dimensional, likelihood ratio test, random projection, subspace, large $p$ small $n$.
\par
\end{quotation}\par

\def\thefigure{\arabic{figure}}
\def\thetable{\arabic{table}}

\fontsize{10.95}{14pt plus.8pt minus .6pt}\selectfont

\setcounter{chapter}{1}
\setcounter{equation}{0} 
\noindent {\bf 1. Introduction}

One-sample and two-sample testing problems for high-dimensional covariance matrices are considered in this paper. In high-dimensional setting, the conventional methods fail usually due to the singularity of the sample covariance matrices. Consider one-sample test and let $X_1,\ldots,X_n$ follow a $p$-dimensional normal distribution $N_p(\0,\Sigmab)$. We want to test
\begin{equation}\label{t-prob1}
 H^1_0: \Sigmab = \I,
\end{equation}
where $\I$ is a $p\times p$ identity matrix.  Note that for a given covariance matrix $\Sigmab_0$ we can always test (\ref{t-prob1}) based on the transformed data $\tilde{X}_k = \Sigmab_0^{-1/2}X_k$.

The likelihood ratio test statistic for (\ref{t-prob1}) is given by
\begin{align}\label{lrt}
T_1 = n\cdot(\tr(S) - \log|S| -p),
\end{align}
where $S = \sum^{n}_{k=1}X_kX_k^{\intercal}/n$ is the sample covariance matrix and $\tr(S)$ denotes the trace of $S$. The likelihood ratio test performs poorly when $p$ increases as $n$ tends to infinity.  It has been shown  numerically that the size of the test based on (\ref{lrt}) is 100\% in the case $(p,n)=(300,500)$ (\cite{Bai-et-al-2009}). Further, the test statistic is undefined when $p>n$ due to the singularity of the  sample covariance matrix. Bai {\em et. al.}~\cite{Bai-et-al-2009} proposed a corrected likelihood ratio test (CLRT) with a condition $p/n\to c\in (0,1).$ They established the asymptotic normality result for a corrected version of $T_1$ using random matrix theory. Some related  works on CLRT can be found in \cite{Jiang-2012} and \cite{Jiang-2013}. Instead of  using the sample covariance matrix, Chen {\em et. al.}~\cite{chen-2010} proposed an one-sample test based on more accurate estimators of $\tr(\Sigmab)$ and $\tr({\Sigmab^2})$ with the assumption $\tr({\Sigmab^4})=o(\tr^2({\Sigmab^2}))$.

In two-sample test, let $X_1,\ldots,X_{n_1}$ follow a $p$-dimensional normal distribution $N_p(\0,\Sigmab_1)$ and $Y_1,\ldots,Y_{n_2}$ follow a $p$-dimensional normal distribution $N_p(\0,\Sigmab_2)$. We want to test
\begin{equation}
 H^2_0: \Sigmab_1 = \Sigmab_2.
\end{equation}
The likelihood ratio test statistic
\begin{equation}
T_2 = -2 \log \frac{|S_1|^{n_1/2} \cdot |S_2|^{n_2/2}}{|c_1S_1+c_2S_2|^{(n_1+n_2)/2}},
\end{equation}
where $S_1$ and $S_2$ are the sample covariance matrices of  $\{X_k\}^{n_1}_{k=1}$ and $\{Y_k\}^{n_2}_{k=1}$, respectively, and $c_j=n_j/(n_1+n_2)$, $j=1,2$, encounters the same problem that the sample covariance matrices are singular when $p>n$.

There have been advances in the field of testing high-dimensional covariance matrices. We recognize three approaches in this field. The limiting distribution of extreme eigenvalues of the sample covariance matrix is derived in \cite{Bai-1993} and \cite{Bai-1993a} based on random matrix theory.  Bickel and Levina~\cite{Levina-2008b,Levina-2008a} and Li and Chen~\cite{li-2012} derive consistent and better estimators to the population covariance matrices.  The regularized covariance estimator is proposed by solving  a maximum likelihood estimation problem subject to a constrain on the condition number~\cite{bala-2013}.
There is the line of work of using random projections for testing two-sample means in high-dimension~\cite{Lopes-2014,Li-2015}.

 In this paper, we study the random projection method with focus on projecting the data onto only one-dimensional subspace. Therefore, any one-dimensional test can be used on the projected data. Surprisingly, the one-dimensional random projection turns out to be quite remarkable on certain class of covariance matrices. The foundation of random projection method is the lemma in \cite{Johnson-1984} where the distances between projected data points are approximately preserved.
The reason for adopting the method is
three folds: (i) conceptually simple, (ii) easy to program and (iii) efficient in computation. We will illustrate the method based on some conventional statistics. We want to emphasize that by reducing the dimension using random projection, we do not require any explicit relationship between $p$ and $n$,  unless otherwise mentioned.

This rest of the paper is organized in the following way. In Section 2, the one-sample test is considered. In Section 3, the two-sample test is considered.  Numerical results to compare our method with two other well-known methods are given in Section 4.  Summary and discussion are given in Section 5.

\par

\vspace{0.1in}

\setcounter{chapter}{2}
\setcounter{equation}{0} 
\noindent {\bf 2. One-sample tests}

Let $X_1,\ldots,X_n$ follow a $p$-dimensional normal distribution $N_p(\0,\Sigmab)$. We want to test
\begin{equation}\label{t-prob3}
 H^1_0: \Sigmab = \I.
\end{equation}

Given a $p\times 1$ random projection vector $R$, the projected data is given by
\begin{equation}\label{yk}
Y_k= R^{\intercal}X_k, \quad k=1,\ldots,n,
\end{equation}
where $Y_k$, conditioned on $R$, follows $N(0,\sigma^2=R^{\intercal}\Sigmab R)$. Note that for notational simplicity the subscript $p$ for one-dimensional normal distribution is suppressed.   Before we continue to develop any tests for the one-dimensional projected data, we briefly discuss about the choice of the random projection vector $R$. In practice, we may use any random projection $R^{\intercal}=(r_1,\ldots,r_p)$ with mean $\0$ and $E(r_i r_j) = 0$, $i\not =j$. A particular choice is the normalized  random vector with independent $N(0,1)$ entries. The advantage of using such a random projection is three folds: (i) it is easy to interpret, (ii) the convergence rate is fast and (iii) the random vector is well studied~\cite{fang-1990}. The reasoning is as follows. Given a vector $R$, the problem is reduced to test if the variance  of the projected data, $\sigma^2$, is equal to 1, since $R^{\intercal}\I R = 1$ under the null hypothesis. Simulations under different choices of random projection vectors were conducted and the results have indicated that the convergence rate of the asymptotic normality based on the normalized version is better than the one based on the  non-normalized random vector.

Note that, for computational efficiency, Srivastava, Li and Ruppert~\cite{Li-2015} also proposed the use of ``one permutation + one random projection'', a variant which borrowed the idea from ``very sparse random projections'' in~\cite{li-2006} and ``one permutation hashing'' in~\cite{LiOwenZhang-2012}.

The danger of using only one projection is that the conclusion may be completely opposite (low power) for the same data set using different projections. The following example is  unrealistic but clearly illustrates the problem. Let the covariance matrix be
\[
\Sigmab = \begin{pmatrix}
  3 & 2.5 & 2 \\
  2.5 & 2 & 1.5 \\
  2 & 1.5 & 1
 \end{pmatrix},
\]
and the projection vector is $R^{\intercal}=(0,0,1)$. Rejecting the null hypothesis is unlikely since $\sigma^2 = R^{\intercal}\Sigmab R =1.$ One solution is to use $m$ random projections as follows:
\begin{enumerate}
	\item Sample $m$ independent random vectors, $R_1,\ldots,R_m$.
	
	\item For each $R_i, i=1,\ldots,m$, compute the projected data $Y^{i}_k$, $k=1,\ldots,n$ and the statistic $T^i_n$.

	\item The maximum value $T_{1,n} = \max_{1\leq i \leq m}T^i_n$ is used as test statistic for (\ref{t-prob3}).
\end{enumerate}

In the sequel, we develop the test statistic and the asymptotic properties for one-sample test. Given the data and the random projection $R_i$, the conventional statistic
\begin{equation}\label{test}
	  \sum^{n}_{k=1} {Y^i_k}^2
\end{equation}
is used, where $Y^i_k$ is given in~(\ref{yk}).
The statistic in (\ref{test}) is sufficient and chi-square distributed with $n$ degrees of freedom.  It follows that the standardized statistic  $\tilde{T}^i_n = \frac{\sum^{n}_{k=1} {Y^i_k}^2  -n}{\sqrt{2n}}$ converges in distribution to the standard normal $N(0,1)$. I.e.,
\begin{equation}\label{m-test1}
	P(\tilde{T}^i_n<x) - \Phi(x) = o(1).
\end{equation}
However, it is a known fact that the convergence  of $(\ref{m-test1})$ is slow. Hence, here we apply the square root transformation
\begin{equation}\label{mo-test}
	{T^i_n} = \sqrt{2\sum^{n}_{k=1} {Y^i_k}^2 }-\sqrt{2n-1}.
\end{equation}

The following lemma is a well-known result due to \cite{fisher-1973}.
\begin{lemma}\label{fisher}
As $n\to\infty$,
$$T^i_n= \sqrt{2\sum^{n}_{k=1} {Y^i_k}^2 }-\sqrt{2n-1} \overset{D}{\to}  N(0,1),$$
where $\overset{D}{\to} $ denotes the convergence in distribution.
\end{lemma}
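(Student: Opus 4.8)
The plan is to recognize this as Fisher's classical square-root correction for a chi-square variate and to obtain it from the ordinary central limit theorem together with a first-order Taylor (delta-method) expansion. Under $H^1_0$ (equivalently, when $R_i$ is normalized so that $R_i^{\intercal}\I R_i = 1$) the projected observations satisfy $Y^i_k \sim N(0,1)$, so that $W_n := \sum_{k=1}^{n}{Y^i_k}^2 \sim \chi^2_n$; writing $W_n = \sum_{k=1}^n Z_k^2$ with $Z_1,\dots,Z_n$ i.i.d.\ $N(0,1)$ (mean $1$, variance $2$), the CLT gives
$$ U_n := \frac{W_n - n}{\sqrt{2n}} \;\overset{D}{\to}\; N(0,1). $$

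Next I would transport $U_n$ through the map $x\mapsto\sqrt{2x}$. Since $W_n/n = 1 + \sqrt{2/n}\,U_n$, we have $\sqrt{2W_n} = \sqrt{2n}\,\sqrt{1+\sqrt{2/n}\,U_n}$, and the expansion $\sqrt{1+\varepsilon} = 1 + \tfrac12\varepsilon + O(\varepsilon^2)$ applied with $\varepsilon = \sqrt{2/n}\,U_n = O_P(n^{-1/2})$ yields $\sqrt{2W_n} = \sqrt{2n} + U_n + O_P(n^{-1/2})$. Because $\sqrt{2n} - \sqrt{2n-1} = (\sqrt{2n}+\sqrt{2n-1})^{-1}\to 0$, it follows that $T^i_n = \sqrt{2W_n}-\sqrt{2n-1} = U_n + o_P(1)$, and Slutsky's theorem then gives $T^i_n \overset{D}{\to} N(0,1)$.

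The only step needing care --- and the main obstacle --- is making the $O_P(n^{-1/2})$ remainder rigorous, since $\sqrt{\cdot}$ is not globally Lipschitz near the origin. This is handled by restricting to the event $\{W_n \ge n/2\}$, whose probability tends to $1$ (by Chebyshev or by the CLT above): on that event a Taylor expansion of $g(x)=\sqrt{2x}$ with Lagrange remainder $\tfrac12 g''(\xi)(W_n-n)^2$ has $\xi$ of order $n$, so $|g''(\xi)| = (2\xi)^{-3/2} = O(n^{-3/2})$ while $(W_n-n)^2 = O_P(n)$, giving a remainder of order $O_P(n^{-1/2})$; the contribution on the complementary event vanishes in probability. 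Everything else is routine. Alternatively, one may simply quote the cited result of \cite{fisher-1973}.
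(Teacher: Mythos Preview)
Your argument is correct: recognizing $W_n=\sum_k {Y^i_k}^2\sim\chi^2_n$ under $H^1_0$, applying the CLT to get $U_n=(W_n-n)/\sqrt{2n}\overset{D}{\to}N(0,1)$, and then pushing through the square-root map via a delta-method expansion (with the remainder controlled on the high-probability event $\{W_n\ge n/2\}$) is the standard rigorous route to Fisher's approximation, and the final Slutsky step is clean.

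The paper, however, does not prove this lemma at all---it simply states it as ``a well-known result due to \cite{fisher-1973}'' and moves on. So your proposal goes considerably further than the paper's own treatment; in effect you have supplied the proof that the authors chose to omit by citation. Your closing remark that one may simply quote \cite{fisher-1973} is exactly what the paper does.
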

To develop the asymptotic distribution of the test statistic, we need some quadratic form results. Assume $x\sim N(\0,\Sigmab)$ and $\A$ and $\B$ to be symmetric matrices. Then
\begin{itemize}
\item
\itemEq{ \label{r1} E(x^{\intercal}\A x) = \mbox{\tr}(\A\Sigmab).}

\item
\itemEq{ \label{r2}  \mbox{Cov}(x^{\intercal}\A x,x^{\intercal}\B x) = 2\mbox{\tr}(\A\Sigmab\B\Sigmab).}

\end{itemize}
Note that equation~(\ref{r1}) does not require normality.

\begin{lemma}\label{cov}
Under $H^1_0$, for any $k\in\{1,\ldots,n\}$ and $i\not= j\in\{1,\ldots,m\}$,
\begin{equation}
\Cov(X^{\intercal}_kR_iR_i^{\intercal}X_k, X^{\intercal}_kR_jR_j^{\intercal}X_k) = 2/p.
\end{equation}

\end{lemma}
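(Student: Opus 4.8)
The plan is to apply the law of total covariance, conditioning on the random projection vectors, and then reduce the remaining expectation to a second–moment computation for a uniform point on the sphere. Write $\A_i = R_iR_i^\intercal$ and $\A_j = R_jR_j^\intercal$; these are symmetric (rank-one) matrices independent of $X_k$. Conditioning on $(R_i,R_j)$,
\begin{align*}
\Cov\bigl(X_k^\intercal \A_i X_k,\, X_k^\intercal \A_j X_k\bigr)
&= E\bigl[\Cov\bigl(X_k^\intercal \A_i X_k,\, X_k^\intercal \A_j X_k \mid R_i,R_j\bigr)\bigr] \\
&\quad{}+ \Cov\bigl(E[X_k^\intercal \A_i X_k\mid R_i],\, E[X_k^\intercal \A_j X_k\mid R_j]\bigr).
\end{align*}

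Next I would dispose of the second term. Under $H^1_0$ we have $X_k\sim N_p(\0,\I)$, so by (\ref{r1}), $E[X_k^\intercal \A_i X_k\mid R_i] = \tr(\A_i) = \|R_i\|^2 = 1$ since $R_i$ is normalized to unit length, and similarly for $j$; both conditional means equal the deterministic constant $1$, so their covariance vanishes. For the first term, again using $\Sigmab=\I$, identity (\ref{r2}) gives $\Cov\bigl(X_k^\intercal \A_i X_k,\, X_k^\intercal \A_j X_k \mid R_i,R_j\bigr) = 2\tr(\A_i\A_j) = 2\tr(R_iR_i^\intercal R_jR_j^\intercal) = 2(R_i^\intercal R_j)^2$. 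Hence $\Cov\bigl(X_k^\intercal \A_i X_k,\, X_k^\intercal \A_j X_k\bigr) = 2\,E[(R_i^\intercal R_j)^2]$.

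It remains to show $E[(R_i^\intercal R_j)^2] = 1/p$. Since each $R_\ell$ is a normalized vector of i.i.d.\ $N(0,1)$ coordinates, $R_i$ and $R_j$ are independent and uniformly distributed on the unit sphere $S^{p-1}$. Conditioning on $R_j$ and using the rotational invariance of the uniform law on $S^{p-1}$, $R_i^\intercal R_j$ has the distribution of the first coordinate $u_1$ of a uniform point $u\in S^{p-1}$; by exchangeability of the coordinates of $u$, $E[u_1^2] = \tfrac1p\sum_{\ell=1}^p E[u_\ell^2] = \tfrac1p\,E[\|u\|^2] = 1/p$. Combining the pieces gives $\Cov\bigl(X_k^\intercal R_iR_i^\intercal X_k,\, X_k^\intercal R_jR_j^\intercal X_k\bigr) = 2/p$.

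I do not anticipate a real obstacle: the argument is a routine conditioning step combined with the quadratic-form formulas (\ref{r1})--(\ref{r2}). The only point that needs care is bookkeeping — being explicit that the outer expectation runs over $(R_i,R_j)$ while (\ref{r1})--(\ref{r2}) are applied to $X_k$ with $\A_i,\A_j$ held fixed — and observing that the clean answer uses both $\Sigmab=\I$ (so $\tr(\A_i\A_j)$ collapses to $(R_i^\intercal R_j)^2$) and the unit normalization $\|R_i\|=1$ (so the conditional means are constant); this last point is also why the paper favors the normalized projection.
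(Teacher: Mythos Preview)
Your proof is correct and follows essentially the same route as the paper: both apply the law of total covariance conditioning on $(R_i,R_j)$, kill the second term via $\tr(R_iR_i^\intercal)=\|R_i\|^2=1$, and reduce the first term through (\ref{r2}) to $2E[(R_i^\intercal R_j)^2]$. The only cosmetic difference is in evaluating this last expectation: the paper reapplies (\ref{r1}) to the inner quadratic form $R_j^\intercal(R_iR_i^\intercal)R_j$ using $E[R_jR_j^\intercal]=\tfrac1p\I$, whereas you argue directly from rotational invariance and coordinate exchangeability on $S^{p-1}$; these are the same fact in two phrasings.
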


\begin{proof}
It follows from (\ref{r1}) and (\ref{r2}) that
\begin{align*}
\Cov({Y^{i}_k}^2,{Y^{j}_k}^2) & = E(\Cov({Y^{i}_k}^2,{Y^{j}_k}^2|R_i,R_j)) + \Cov(E({Y^{i}_k}^2|R_i),E({Y^{j}_k}^2|R_j))\\
       & = E(\Cov(X_k^{\intercal}R_iR_i^{\intercal}X_k,X_k^{\intercal}R_jR_j^{\intercal}X_k|R_i,R_j)) \\
       & = 2 E(\tr(R_iR_i^{\intercal}R_jR_j^{\intercal}))\\
       & = 2 E(E(R_j^{\intercal}R_iR_i^{\intercal}R_j)|R_i)\\
       & =  2 E\left(\tr\left(R_iR_i^{\intercal}\frac{1}{p}\I\right)\right)\\
       & = \frac{2}{p}E(R_i^{\intercal}R_i)\\
       & = \frac{2}{p}.
\end{align*}
The proof is completed.
\end{proof}

\begin{lemma}\label{asym-ind}
For $i\not = j$, ${\tilde{T}^i_n}$ and ${\tilde{T}^j_n}$ are asymptotically independent.
\end{lemma}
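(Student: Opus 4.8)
The plan is to show that the bivariate random vector $({\tilde T}^i_n, {\tilde T}^j_n)$ converges jointly to a bivariate normal distribution with identity covariance matrix, which by the Cram\'er--Wold device reduces to showing that every linear combination $a{\tilde T}^i_n + b{\tilde T}^j_n$ is asymptotically $N(0,a^2+b^2)$. Since each marginal is already known to be asymptotically $N(0,1)$ by~(\ref{m-test1}), the crux is to control the covariance $\Cov({\tilde T}^i_n,{\tilde T}^j_n)$ and verify it tends to $0$. Writing out the definition, $\Cov({\tilde T}^i_n,{\tilde T}^j_n) = \frac{1}{2n}\sum_{k=1}^n\sum_{\ell=1}^n \Cov({Y^i_k}^2,{Y^j_\ell}^2)$; for $k\neq\ell$ the pairs $(X_k,X_\ell)$ are independent and the only dependence is through the shared projections $R_i,R_j$, while for $k=\ell$ the term is exactly $2/p$ by Lemma~\ref{cov}. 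So the whole double sum is $O(n)+O(n^2/p)\cdot(\text{something small})$; the key computation is the off-diagonal term.

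First I would compute $\Cov({Y^i_k}^2,{Y^j_\ell}^2)$ for $k\neq\ell$ by conditioning on $(R_i,R_j)$: given the projections, $X_k$ and $X_\ell$ are independent, so $\Cov({Y^i_k}^2,{Y^j_\ell}^2\mid R_i,R_j)=0$, and what remains is $\Cov(E({Y^i_k}^2\mid R_i),E({Y^j_\ell}^2\mid R_j)) = \Cov(R_i^{\intercal}R_i, R_j^{\intercal}R_j)$ under $H_0^1$ (using $E({Y^i_k}^2\mid R_i)=R_i^{\intercal}\I R_i=R_i^{\intercal}R_i$). For the normalized random projection, $R_i^{\intercal}R_i=1$ deterministically, so this covariance is exactly $0$; for the unnormalized Gaussian projection scaled by $1/\sqrt p$, $\|R_i\|^2$ is a $\chi^2_p/p$ variable with variance $2/p$, but $R_i$ and $R_j$ are independent, so the covariance is still $0$. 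Either way the off-diagonal contribution vanishes identically, and hence $\Cov({\tilde T}^i_n,{\tilde T}^j_n) = \frac{1}{2n}\cdot n\cdot\frac{2}{p} = \frac{1}{p} \to 0$ as $p\to\infty$ (or is negligible relative to the unit marginal variances in any regime where $p$ grows).

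Next, to upgrade ``covariance $\to 0$'' to ``asymptotic independence'' I would invoke joint asymptotic normality. One clean route: condition on the projections $R_i, R_j$; given these, ${\tilde T}^i_n$ and ${\tilde T}^j_n$ are built from the independent chi-square-type sums $\sum_k {Y^i_k}^2$ and $\sum_k {Y^j_k}^2$ only through the common data $X_1,\dots,X_n$, so I would instead condition on nothing but exploit that the pair $(\sum_k Y^i_k{}^2 - n\sigma_i^2, \sum_k Y^j_k{}^2 - n\sigma_j^2)/\sqrt n$ is, after centering and by a bivariate Lindeberg/CLT argument applied to the i.i.d.\ summands $({Y^i_k}^2-\sigma_i^2, {Y^j_k}^2-\sigma_j^2)$, asymptotically bivariate normal with covariance matrix whose off-diagonal entry is $\Cov({Y^i_k}^2,{Y^j_k}^2)=2/p\to 0$; then the same square-root/standardization transformation that gives Lemma~\ref{fisher} preserves joint normality with asymptotically diagonal covariance. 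Bivariate normality plus zero (limiting) covariance gives independence of the limit, and since the limits are standard normal, ${\tilde T}^i_n$ and ${\tilde T}^j_n$ are asymptotically independent.

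The main obstacle I anticipate is not the covariance bound — that is a short conditioning computation — but making the joint CLT rigorous when the summand distributions depend on $p$ (hence change with $n$ if $p=p_n$): one must either fix the random projections and argue conditionally (where the summands $({Y^i_k}^2,{Y^j_k}^2)$ are i.i.d.\ with finite fourth moments, so a bivariate Lyapunov condition holds uniformly), then integrate out, or invoke the known marginal rates in~(\ref{m-test1}) together with the Cram\'er--Wold reduction so that only a one-dimensional CLT for the linear combination $a{\tilde T}^i_n+b{\tilde T}^j_n$ is needed. A careful statement of exactly what asymptotic regime (whether $p\to\infty$ is assumed, and whether $m$ is fixed) is required here, and I would make that hypothesis explicit.
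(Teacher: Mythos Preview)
Your proposal is correct and follows essentially the same approach as the paper: compute $\Cov(\tilde T^i_n,\tilde T^j_n)=1/p\to 0$ via Lemma~\ref{cov} and combine with marginal asymptotic normality to obtain asymptotic independence. You are in fact more careful than the paper, which silently drops the off-diagonal terms $k\neq\ell$ in the double sum and does not explicitly justify the passage from vanishing covariance to joint normality with diagonal limit covariance; your explicit verification of both points (the cross terms vanish because $R_i^{\intercal}R_i=1$ deterministically, and the Cram\'er--Wold/bivariate Lindeberg step) fills genuine gaps, but the core computation is identical.
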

\begin{proof}
It is easy to see that ${\tilde{T}^i_n}$ and ${\tilde{T}^j_n}$ are asymptotically normally distributed with standard deviation 1. We now show the covariance between ${\tilde{T}^i_n}$ and ${\tilde{T}^j_n}$ is 0 as $p\to\infty$.
It follows from Lemma~\ref{cov} that
 \begin{align*}
 \Cov({\tilde{T}^i_n},{\tilde{T}^j_n}) &= \frac{1}{2n}  \Cov\left(\sum_k X^{\intercal}_kR_iR_i^{\intercal}X_k,\sum_k X^{\intercal}_kR_jR_j^{\intercal}X_k\right) \\
 						         &= \frac{1}{2}\Cov( X^{\intercal}_kR_iR_i^{\intercal}X_k, X^{\intercal}_kR_jR_j^{\intercal}X_k) \\
 						         &= 1/p \to 0, \quad \mbox{as } p\to \infty.
 \end{align*}
The covariance is independent of $n$ and this completes the proof.
\end{proof}

\begin{theorem}\label{thm1}
Under $H^1_0,$
\begin{equation}
T_{1,n}  \overset{D}{\to}  \max_{1\leq i \leq m}Z_i,
\end{equation}
where  $Z_i's$ are independent standard normal.
\end{theorem}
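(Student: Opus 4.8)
The plan is to strengthen the pairwise asymptotic independence of Lemma~\ref{asym-ind} to \emph{joint} weak convergence of the whole vector $(T^1_n,\dots,T^m_n)$ to a vector of independent standard normals, and then to apply the continuous mapping theorem to the maximum map $(x_1,\dots,x_m)\mapsto\max_i x_i$, which is continuous on $\mathbb{R}^m$. Since $R_1,\dots,R_m$ are drawn once and shared by all $n$ observations, the convenient device is to condition on $R=(R_1,\dots,R_m)$. Under $H^1_0$ with each $R_i$ normalized, $Y^i_k\mid R\sim N(0,\,R_i^{\intercal}\I R_i)=N(0,1)$ and, for fixed $R$, the vectors $(Y^1_k,\dots,Y^m_k)$ are i.i.d.\ over $k=1,\dots,n$ and jointly Gaussian with $\Cov(Y^i_k,Y^j_k\mid R)=R_i^{\intercal}R_j$. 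Applying the quadratic-form identities (\ref{r1})--(\ref{r2}) with $x=X_k\sim N(\0,\I)$ and the fixed matrices $R_iR_i^{\intercal},R_jR_j^{\intercal}$ gives $E({Y^i_k}^2\mid R)=1$, $\Var({Y^i_k}^2\mid R)=2$ and $\Cov({Y^i_k}^2,{Y^j_k}^2\mid R)=2(R_i^{\intercal}R_j)^2$, so the ordinary multivariate CLT in $n$, given $R$, yields
\[
\tilde{\mathbf T}_n:=(\tilde T^1_n,\dots,\tilde T^m_n)^{\intercal}\ \overset{D}{\to}\ N_m(\0,\Gamma(R)),\qquad \Gamma_{ii}(R)=1,\quad \Gamma_{ij}(R)=(R_i^{\intercal}R_j)^2\ \ (i\neq j).
\]

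Next I would remove the conditioning on $R$ by comparing characteristic functions. The conditional characteristic function of $\tilde{\mathbf T}_n$ converges as $n\to\infty$ to that of $N_m(\0,\Gamma(R))$, and the convergence is uniform in $R$ because the conditional fourth moments $E({Y^i_k}^4\mid R)=3$ are bounded irrespective of $R$ (a Berry--Esseen estimate); hence dominated convergence gives that the unconditional characteristic function of $\tilde{\mathbf T}_n$ converges to $E\big[\exp(-\tfrac12 t^{\intercal}\Gamma(R)t)\big]$. As $p\to\infty$, the inner products of independent normalized Gaussian vectors concentrate, so $R_i^{\intercal}R_j\to 0$ in probability for $i\neq j$, whence $\Gamma(R)\to\I_m$ and a second dominated-convergence step identifies the limit as $\exp(-\tfrac12\|t\|^2)$. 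Therefore $\tilde{\mathbf T}_n\overset{D}{\to}N_m(\0,\I_m)$; the vanishing off-diagonal entries (already foreshadowed by Lemma~\ref{asym-ind}) translate into genuine coordinatewise independence precisely because the limit law is Gaussian.

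Finally I would pass from $\tilde T^i_n$ to the statistics $T^i_n$ actually used. The expansion $\sqrt{2\sum_k{Y^i_k}^2}=\sqrt{2n}\,\big(1+\sqrt{2/n}\,\tilde T^i_n\big)^{1/2}=\sqrt{2n}+\tilde T^i_n+o_P(1)$ — the computation underlying Lemma~\ref{fisher} — together with $\sqrt{2n-1}=\sqrt{2n}+o(1)$ shows $T^i_n-\tilde T^i_n\overset{P}{\to}0$ for each $i$, so Slutsky's theorem upgrades the joint convergence to $(T^1_n,\dots,T^m_n)\overset{D}{\to}(Z_1,\dots,Z_m)$ with the $Z_i$ i.i.d.\ $N(0,1)$. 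The continuous mapping theorem applied to the max map then gives $T_{1,n}=\max_{1\le i\le m}T^i_n\overset{D}{\to}\max_{1\le i\le m}Z_i$, which is the claim. The only genuinely delicate point is the double-limit bookkeeping in the middle step: one must make the conditional CLT hold uniformly in $R$ so that the limit in $n$ (marginal normality) and the limit in $p$ (decorrelation of the projections) may be taken jointly or in either order, consistent with the paper's position that no explicit rate linking $p$ and $n$ is required; the remaining steps are routine.
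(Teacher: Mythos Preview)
Your proposal is correct and follows the same skeleton as the paper's proof: establish joint asymptotic normality of $(T^1_n,\dots,T^m_n)$ via a multivariate CLT, show the limiting covariance matrix is the identity as $p\to\infty$ (the paper invokes Lemma~\ref{asym-ind} directly, you compute $\Gamma_{ij}(R)=(R_i^{\intercal}R_j)^2\to 0$), and then read off the distribution of the maximum. Your version is considerably more careful than the paper's three-line argument---you make explicit the conditioning on $R$, the uniform-in-$R$ Berry--Esseen control needed to interchange the $n$ and $p$ limits, the Slutsky passage from $\tilde T^i_n$ to $T^i_n$, and the continuous mapping step for the max---all of which the paper leaves implicit, but the underlying strategy is the same.
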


\begin{proof}
Based on multivariate central limit theorem~\cite{van-1996}, it follows from (\ref{test}) and (\ref{mo-test}) that we have
\begin{equation}
({T^1_n},\ldots,{T^m_n}) \overset{D}{\to} N(\0,\Sigma_{T}),
\end{equation}
where $(\Sigma_{T})_{ij} =\mbox{Cov}({T^i_n},{T^j_n})$.
It follows from Lemma~\ref{asym-ind} that the covariance matrix is approaching identity matrix as $p\to \infty$. I.e., they are asymptotically independent.
 This completes the proof.
\end{proof}

\begin{remark}
To account for finite $p$, the exact covariance matrix can be obtained numerically via (5.5.7) in  \cite{Harvey-1965}.
\end{remark}

\begin{remark}
If we want to test $H^1_0: \Sigmab=\I $ against $H^1_a:  \Sigmab-\I$  is a positive (negative) definite matrix, the test statistic $ \max_{1\leq i \leq m}T^i_n$  ($ \min_{1\leq i \leq m}T^i_n$) is a natural choice.
For general alternative hypothesis, we may use the modified test statistic $(\max_{1\leq i \leq m}T^i_n,\min_{1\leq i \leq m}T^i_n$) and reject the null hypothesis if
 $\max_{1\leq i \leq m}T^i_n \geq c_{\max}$ or $\min_{1\leq i \leq m}T^i_n \leq c_{\min}$, where $c_{\max}$ and $c_{\min}$ satisfy
 \begin{equation}\label{two-sided}
 P\left(\max_{1\leq i \leq m}T^i_n>c_{\max} \mbox{ or } \min_{1\leq i \leq m}T^i_n \leq c_{\min}\right) = \alpha,
 \end{equation}
 and can be chosen to be $z_{1-(1-\alpha/2)^{1/m}}$ and $-z_{1-(1-\alpha/2)^{1/m}}$, respectively, at a given $\alpha$ level. We denote by $z_{\alpha}$ the upper $\alpha\times100$ percentile of the  standard normal distribution.   The test using (\ref{two-sided}) is referred to as two-sided test, while the test using  $ \max_{1\leq i \leq m}T^i_n$  (or $ \min_{1\leq i \leq m}T^i_n$) is referred to as one-sided test.
\end{remark}

\setcounter{chapter}{2}
\noindent {\bf 2.1 Approximation of covariance between different random projections}

Based on  Lemma~\ref{asym-ind}, we understand that the covariance between different random projections tends to 0 as $p\to \infty$. Knowing the convergence rate would give more insight  into our theorems. With additional assumptions given below, we can approximate the convergence rate of the covariance matrix $\Sigmab_T$ using method of moments.

\begin{description}
\item[Assumption 1]
Let $\lambda_i,i=1,\ldots,p,$ be eigenvalues of $\frac{1}{n}\sum^{n}_{k=1} X_kX_k^{\intercal}$. The average $\frac{1}{p}\sum^{p}_{i=1} \lambda_i$ of eigenvalues  is uniformly integrable.

\item[Assumption 2]
Let $n\to \infty$ and $p\to\infty$ in such a way that $\frac{p}{n} \to y$, $0\leq y<\infty.$

\end{description}


It follows from the definition of covariance that
\begin{align*}
\Cov({T^i_n},{T^j_n})  & ={2}\Cov\left(\left(\sum^{n}_{k=1} {{Y^i_k}^2}\right)^{1/2},\left(\sum^{n}_{k=1} {{Y^j_k}^2}\right)^{1/2}\right)\\
					    &  =  {2}\Cov\left( \left(\sum^{n}_{k=1}R^{\intercal}_iX_kX_k^{\intercal}R_i\right)^{1/2}, \left(\sum^{n}_{k=1}R^{\intercal}_jX_kX_k^{\intercal}R_j\right)^{1/2}\right)             \\
                                           & ={{2}}\Bigg\{  E\left(\left( \sum^{n}_{k=1}R^{\intercal}_iX_kX_k^{\intercal}R_i\right)^{1/2}\left( \sum^{n}_{k=1}R^{\intercal}_jX_kX_k^{\intercal}R_j\right)^{1/2}\right) \\
 & \quad-E \left( \sum^{n}_{k=1}R^{\intercal}_iX_kX_k^{\intercal}R_i\right)^{1/2} E\left( \sum^{n}_{k=1}R^{\intercal}_jX_kX_k^{\intercal}R_j\right)^{1/2} \Bigg\}               \\
   & = {{2}}(A-B),
\end{align*}
and
\begin{align*}
A = E\left\{E\left( \left(\sum^{n}_{k=1}R^{\intercal}_iX_kX_k^{\intercal}R_i\right)^{1/2}\bigg|\underline{\boldsymbol X}\right)E\left(\left( \sum^{n}_{k=1}R^{\intercal}_jX_kX_k^{\intercal}R_j\right)^{1/2}\bigg|\underline{\boldsymbol X}\right)\right\},
\end{align*}
where $\underline{\boldsymbol X}=(X_1,\ldots,X_n)$.

Let $S_n=\frac{1}{n}\sum^{n}_{k=1}X_kX_k^{\intercal}$ and $\lambda_i,i=1,\ldots,p,$ be the eigenvalues of $S_n$ and $k_1= \sum^{p}_{i=1} \lambda_i$ and $k_2 =2\sum^{p}_{i=1}\lambda_i^2$.
Using Patnaik's approximation~\cite{Harvey-1965} by matching moments, the moments of quadratic forms can be approximated by the moments of a scaled chi-squared distribution:
\begin{align*}
E\left( \left(\sum^{n}_{k=1}R^{\intercal}_iX_kX_k^{\intercal}R_i\right)^{1/2}\bigg|\underline{\boldsymbol X}\right) \approx E(Y_1^{1/2}),
\end{align*}
where $Y_1 \sim a_1\cdot \chi^2_{\nu_1}$ and $a_1 = \frac{k_2}{2k_1}$, $\nu_1=\frac{2k_1^2}{k_2}$. One can show that
\begin{equation}\label{approx-m}
E(Y_1^{1/2}) = \left( \frac{k_2}{k_1}\right)^{1/2}  \frac{\Gamma\left( \frac{k_1^2}{k_2}+1/2\right)}{\Gamma\left( \frac{k_1^2}{k_2}\right)}.
\end{equation}
 It follows from \cite{Jonsson-1982} that $k_1/p \to 1$ and $k_2/p \to 2(1+y)$ in probability.   By the asymptotic representation of gamma function, the following ratio of gamma functions can be approximated by
\begin{equation}\label{expan}
\Gamma(a+1/2)/\Gamma(a) = \left(a^{1/2} -\frac{1}{8}a^{-1/2}\right) (1+O(a^{-3/2})).
\end{equation}

Substituting (\ref{expan}) into (\ref{approx-m}) and using Assumption 1, we can interchange the limit and expectation and obtain
\begin{equation}
A \approx p-\frac{1+y}{2} +O(p^{-1/2}) \mbox{ and } B \approx p-\frac{1+y}{2} +O(p^{-1/2}).
\end{equation}
Therefore, $\Cov({T^i_n},{T^j_n}) \approx O(p^{-1/2})$.



\lhead[\footnotesize\thepage\fancyplain{}\leftmark]{}\rhead[]{\fancyplain{}\rightmark\footnotesize\thepage}

\vspace{0.1in}

\setcounter{chapter}{3}
\setcounter{equation}{0} 
\noindent {\bf 3. Two-sample tests}

Let $X_1,\ldots,X_{n_1}$ follow a $p$-dimensional normal distribution $N(\0,\Sigmab_1)$ and $Y_1,\ldots,Y_{n_2}$ follow a $p$-dimensional normal distribution $N(\0,\Sigmab_2)$. Let $S_1$ and $S_2$ be the sample covariance matrices calculated from $\{X_k\}^{n_1}_{k=1}$ and $\{Y_k\}^{n_2}_{k=1}$, respectively. We want to test
\begin{equation}\label{2t-prob}
 H^2_0: \Sigmab_1 = \Sigmab_2.
\end{equation}

We project the two samples $X_k$ and $Y_k$ using normalized Gaussian random vectors $R_i, i=1,\ldots,m$. Thus, given $R_i$, the projected data	
$X_k^i = R_i^{\intercal}X_k \sim N(\0,R_i^{\intercal}\Sigmab_1R_i = \sigma^2_1)$
and
$Y_k^i = R_i^{\intercal}Y_k \sim N(\0,R_i^{\intercal}\Sigmab_2R_i = \sigma^2_2)$
are one-dimensional. Given $R_i$, we are interested in testing the equality of the two
variances of the projected data. In doing so, the conventional statistic
\begin{align*}
F_i = s^i_1/s^i_2
\end{align*}
is used, where $s^i_1 = \frac{1}{n_1}\sum^{n_1}_{k=1} R_i^{\intercal}X_kX_k^{\intercal}R_i$ and $s^i_2 =\frac{1}{n_2}\sum^{n_2}_{k=1}  R_i^{\intercal}Y_kY_k^{\intercal}R_i$ are the  sample variances of the projected data $\{X^i_k\}$ and $\{Y^i_k\}$, respectively.  For each $R_i$, we have,  under $H^{2}_0$,
\begin{align*}
P(F_i<t) = P( s^i_1/s^i_2\leq t) & = P\left(  \frac{\sum {X^i_k}^2/(n_1\sigma_1^2)}{\sum {Y^i_k}^2/(n_2 \sigma_2^2)}\leq t \right) \\
 				 		& = EP\left(  \frac{\sum {X^i_k}^2/(n_1\sigma_1^2)}{\sum {Y^i_k}^2/(n_2 \sigma_2^2)}\leq t |R_i\right) \\
                                                & = EP(F_{n_1,n_2}<t) \\
  					        & =P(F_{n_1,n_2}<t),
\end{align*}
where $F_{n_1,n_2}$ has a F-distribution with degrees of freedom $n_1$ and $n_2$. Therefore, $F_i$ has a F-distribution with degrees of freedom $n_1$ and $n_2$.
The testing procedure is the same for one-sample and two-sample cases.
To test $H^{2}_0$ based on $m$ projections, the test statistic is given by
\begin{equation}\label{test-2}
 \max_{1\leq i \leq m} F_i.
\end{equation}
A random variable $F$ having F-distribution with degrees of freedom $n_1$ and $n_2$ can be written as
\begin{equation}\label{F}
F = \frac{\frac{U_{n_1}}{n_1}}{\frac{V_{n_2}}{n_2}},
\end{equation}
where $U_{n_1}$ and $V_{n_2}$ are two independent chi-square random variables with degrees of freedom $n_1$ and $n_2$, respectively. To derive the asymptotic normality, we use natural logarithm, $\ln$, function
\begin{equation}\label{log-F}
F^* = \ln(F)\cdot\left( {2/n_1 + 2/n_2} \right)^{-1/2},
\end{equation}
which is commonly known as the variance stabilizing transformation~\cite{Shoemaker-2003}.
Under normal population, one can show that the variance of $\ln(U_{n_1})$ is approximately $2/n_1$ (\cite{Bartlett-Kendall-1946}). It can be shown that the two chi-square random variables  $n_1s^i_1/\sigma_i^2$ and  $n_2s^i_2/\sigma_i^2$ are asymptotically independent in the sense that, after normalization, both statistics are asymptotically normally distributed with covariance equal to zero.
  The transformed $F^*$ is thus asymptotically normally distributed with variance approximately $2/n_1 + 2/n_2$.  Finally, the test statistic
\begin{equation}
T_{2,m} = \max_{1\leq i\leq m} F^*_i,
\end{equation}
is used for testing the null hypothesis $H^{2}_0$ in (\ref{2t-prob}).

\begin{theorem}
 Under $H^2_0$, suppose the common covariance matrix is $\Sigmab$ such that $\tr(\Sigmab)=O(p)$.  As $\min(n_1,n_2) \to \infty$ and $p\to\infty$,
\begin{equation}
T_{2,m} \overset{D}{\to}  \max_{1\leq i \leq m}Z_i,
\end{equation}
where $Z_i's$ are normally distributed with mean $\0$ and $\Cov(Z_i,Z_j) = \Cov(F^*_i ,F^*_j)$.
\end{theorem}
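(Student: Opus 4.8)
The plan is to establish joint asymptotic normality of the vector $(F^*_1,\ldots,F^*_m)$ and then invoke the continuous mapping theorem, since $(x_1,\ldots,x_m)\mapsto\max_{1\le i\le m}x_i$ is continuous; the argument parallels the proof of Theorem~\ref{thm1}. First I would condition on the block of projection vectors $R=(R_1,\ldots,R_m)$ and write $\ln F_i=\ln s^i_1-\ln s^i_2$, where $s^i_1=R_i^{\intercal}S_1R_i$ and $s^i_2=R_i^{\intercal}S_2R_i$. Conditionally on $R$, the vectors $(R_1^{\intercal}X_k,\ldots,R_m^{\intercal}X_k)$, $k=1,\ldots,n_1$, are i.i.d.\ $N(\0,M_R)$ with $(M_R)_{ij}=R_i^{\intercal}\Sigmab R_j$, so the multivariate central limit theorem over $k$ (cf.~\cite{van-1996}) together with the quadratic-form identities~(\ref{r1})--(\ref{r2}) gives $\big(\sqrt{n_1}\,(s^i_1-\sigma_i^2)\big)_{i=1}^m\overset{D}{\to}N(\0,C_R)$ with $(C_R)_{ij}=2(R_i^{\intercal}\Sigmab R_j)^2$, where $\sigma_i^2=R_i^{\intercal}\Sigmab R_i>0$ almost surely. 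The analogous statement holds for the $Y$-sample with $n_2$ replacing $n_1$, and by independence of the two samples it holds independently of the $X$-block.

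Next I would apply the delta method to the variance-stabilizing map $\ln(\cdot)$ around the (random) centres $\sigma_i^2$, using that $\sigma_i^2$ is bounded away from $0$ almost surely. This yields, conditionally on $R$, that $(\ln F_1,\ldots,\ln F_m)$ is asymptotically normal with mean $\0$ and $\Cov(\ln F_i,\ln F_j\mid R)\approx\big(\tfrac1{n_1}+\tfrac1{n_2}\big)\,\frac{2(R_i^{\intercal}\Sigmab R_j)^2}{\sigma_i^2\sigma_j^2}$; the diagonal entries are $\approx 2/n_1+2/n_2$, which is exactly the Bartlett--Kendall variance of $\ln\chi^2$ invoked above~\cite{Bartlett-Kendall-1946}. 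Multiplying by $(2/n_1+2/n_2)^{-1/2}$ gives, conditionally on $R$, $(F^*_1,\ldots,F^*_m)\overset{D}{\to}N(\0,\Sigma_{F^*})$ with $(\Sigma_{F^*})_{ii}=1$ and $(\Sigma_{F^*})_{ij}=(R_i^{\intercal}\Sigmab R_j)^2/(\sigma_i^2\sigma_j^2)\le 1$. The hypothesis $\tr(\Sigmab)=O(p)$ enters here: it forces $\sigma_i^2$ to stay of order $\tr(\Sigmab)/p=O(1)$ and keeps $R_i^{\intercal}\Sigmab R_j$, $i\ne j$, from blowing up, so that $\Sigma_{F^*}$ has a well-defined limit as $p\to\infty$ --- equal to $\I_m$ when in addition the spectral norm of $\Sigmab$ remains bounded, mirroring Lemma~\ref{asym-ind}, and otherwise matching $\Cov(F^*_i,F^*_j)$. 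To remove the conditioning one integrates the conditional characteristic functions against the law of $R$ and uses dominated convergence (conditional characteristic functions are bounded by $1$); since each $F_i$ is exactly $F_{n_1,n_2}$ distributed, the marginals of the limit are $N(0,1)$, and the limiting off-diagonal entries are $\lim\Cov(F^*_i,F^*_j)$, the covariance asserted for the $Z_i$.

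With joint asymptotic normality in hand, the continuous mapping theorem applied to the (continuous) functional $\max$ gives $T_{2,m}=\max_{1\le i\le m}F^*_i\overset{D}{\to}\max_{1\le i\le m}Z_i$, where $(Z_1,\ldots,Z_m)\sim N(\0,\Sigma_{F^*})$ has the stated covariance $\Cov(Z_i,Z_j)=\Cov(F^*_i,F^*_j)$, completing the proof.

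\noindent\textbf{Main obstacle.} The delicate point is the genuinely joint passage $\min(n_1,n_2)\to\infty$ \emph{and} $p\to\infty$. The conditional CLT needs $n_1,n_2\to\infty$ for fixed $p$; the resulting limiting covariance still depends on $R$ and $p$, and only the further limit $p\to\infty$ collapses its randomness so that the unconditional limit is a true Gaussian rather than a scale mixture. Making this rigorous requires a uniformity estimate --- for instance a Berry--Esseen bound for the conditional CLT with explicit dependence on $p$, or a subsequence argument --- and it is exactly here that $\tr(\Sigmab)=O(p)$ (together, for the sharpest statement, with a bound on the spectral norm of $\Sigmab$) does the work, by keeping $\sigma_i^2$ non-degenerate and the cross terms $R_i^{\intercal}\Sigmab R_j$ asymptotically negligible. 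A secondary, routine nuisance is controlling the lower tail of $s^i_2$ to justify the delta expansion of $\ln s^i_2$; this is immediate from $n_2 s^i_2/\sigma_i^2\sim\chi^2_{n_2}$ conditionally on $R_i$.
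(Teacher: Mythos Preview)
Your proposal is correct and more thorough than the paper's own argument, but the route is genuinely different. The paper works \emph{unconditionally} throughout: after asserting the multivariate CLT, its only displayed computation is that $\Cov(s^i_1/\sigma_i^2,\,s^i_2/\sigma_i^2)=0$ --- i.e., that the numerator and denominator of the \emph{same} $F_i$ are uncorrelated despite sharing the random projection $R_i$ --- obtained via the tower property and the identity $E\!\left(R_i^{\intercal}X_kX_k^{\intercal}R_i/R_i^{\intercal}\Sigmab R_i\,\big|\,R_i\right)=1$; it then invokes Lemmas~\ref{cov} and~\ref{asym-ind} from the one-sample section to handle the cross-projection covariances. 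You instead condition on the full block $R=(R_1,\ldots,R_m)$ from the outset, which makes the within-$i$ independence of $s^i_1$ and $s^i_2$ automatic (they come from disjoint samples), derive the conditional cross-covariance $(R_i^{\intercal}\Sigmab R_j)^2/(\sigma_i^2\sigma_j^2)$ explicitly via the delta method, and only then integrate out $R$. Your approach is more self-contained and produces the explicit form of $\Sigma_{F^*}$ without borrowing the earlier lemmas; the paper's is shorter but sketchier, leaning on the one-sample machinery and leaving the unconditioning implicit. You also correctly isolate the joint $n\to\infty$, $p\to\infty$ interchange as the real technical obstacle --- the paper does not address it.
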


\begin{proof}
The convergence to multivariate normal is straightforward according to multivariate central limit theorem. We show now the covariance
$\Cov(s^i_1/\sigma_1^2,s^i_2/\sigma_2^2)=0.$
It follows from the definition that
\begin{align*}
\Cov(s^i_1/\sigma_1^2,s^i_2/\sigma_2^2) & =   \frac{1}{n_1n_2}\sum_{k}\sum_{h}\Cov\left( \frac{R_i^{\intercal}X_kX_k^{\intercal}R_i}{R_i^{\intercal}\Sigmab R_i} , \frac{R_i^{\intercal}Y_hY_h^{\intercal}R_i}{R_i^{\intercal}\Sigmab R_i}\right).
\end{align*}
Using the  property of conditional expectation, we have
\begin{align*}
&\Cov\left( \frac{R_i^{\intercal}X_kX_k^{\intercal}R_i}{R_i^{\intercal}\Sigmab R_i} , \frac{R_i^{\intercal}Y_hY_h^{\intercal}R_i}{R_i^{\intercal}\Sigmab R_i}\right) \\
&\quad =EE\left( \frac{R_i^{\intercal}X_kX_k^{\intercal}R_i}{R_i^{\intercal}\Sigmab R_i} \cdot \frac{R_i^{\intercal}Y_hY_h^{\intercal}R_i}{R_i^{\intercal}\Sigmab R_i}\Big|R_i\right) - \left\{EE\left( \frac{R_i^{\intercal}X_kX_k^{\intercal}R_i}{R_i^{\intercal}\Sigmab R_i}\Big|R_i\right)\right\}^2 \\
&\quad  = E\left\{  E\left( \frac{R_i^{\intercal}X_kX_k^{\intercal}R_i}{R_i^{\intercal}\Sigmab R_i}\Big|R_i\right) E\left( \frac{R_i^{\intercal}Y_hY_h^{\intercal}R_i}{R_i^{\intercal}\Sigmab R_i}\Big|R_i\right)   \right\} - \left\{ E\left(\frac{\tr(R_iR_i\Sigmab)}{R_i\Sigmab R_i}\right) \right\}^2 \\
&\quad  = E\left\{    \left(\frac{\tr(R_iR_i\Sigmab)}{R_i\Sigmab R_i}\right) \cdot  \left(\frac{\tr(R_iR_i\Sigmab)}{R_i\Sigmab R_i}\right) \right\} -1\\
&\quad  = 1-1=0.
\end{align*}
With the assumption $\tr(\Sigmab) = O(p)$,  we then can apply Lemma~\ref{cov} and \ref{asym-ind} on this two-sample case.
Together with  Lemma~\ref{asym-ind} it follows that $Z_i's$ are asymptotically independent. This completes the proof.
\end{proof}

\begin{remark}
If we are interested in testing whether the difference of two covariance matrices $\Sigmab_1-\Sigmab_2$ is positive (or negative) definite, we may use the test statistic  $\max_{1\leq i\leq m} F^*_i$ (or $ \min_{1\leq i\leq m} F^*_i$). Otherwise, we may use a two-sided test.
\end{remark}

\setcounter{chapter}{4}
\setcounter{equation}{0}
\noindent {\bf 4. Simulations }

Simulations are conducted to evaluate the empirical sizes and powers of the one-sample and two-sample tests based on 1000 replicates. The underlying distribution are assumed to be independent Gaussian distributions with mean 0 and standard deviation 1.
The focus of the simulations study is on the two-sample case where we compare our method with two other well-known methods in the literature.


\noindent {\bf 4.1 One-sample tests }
Table~\ref{T11} gives the sizes of the one-sample test for various $m$ and $p$. It can be seen that the sizes are controlled  well for all $m\leq 1,000$ and for very large $p$ compared to $n$.

\begin{table}[!hb]
\caption{ (Two-sided) Empirical sizes for various $p$ and $m$.   }
\begin{center} \label{T11}
\begin{tabular}{@{}cccccccc@{}}
$n_1=n_2$  &  $p$ & $m=10$   & $m=100$  & $m=1,000$     \\\midrule
50&  32&0.053&0.044&0.051\\
&  64&0.042&0.036&0.061\\
& 128&0.043&0.043&0.059\\
& 256&0.062&0.057&0.063\\
& 512&0.054&0.045& 0.06\\
&1024& 0.04&0.039& 0.05\\
&2048&0.045&0.048&0.041\\
&4096&0.054& 0.05& 0.06\\
100&  32&0.038& 0.05&0.052\\
&  64&0.051&0.053&0.048\\
& 128&0.055&0.054&0.043\\
& 256&0.053&0.048&0.049\\
& 512&0.051&0.049&0.059\\
&1024&0.054&0.047&0.055\\
&2048&0.052&0.051&0.049\\
&4096&0.058&0.045& 0.06\\
200&  32& 0.04&0.043&0.047\\
&  64&0.044& 0.05&0.039\\
& 128&0.044&0.047&0.055\\
& 256&0.041&0.047&0.053\\
& 512&0.054&0.043&0.048\\
&1024& 0.04&0.054&0.062\\
&2048& 0.06&0.039&0.061\\
&4096&0.051&0.045&0.057\\
\hline
\end{tabular}
\end{center}
\end{table}

\noindent {\bf 4.2 Two-sample tests }

In this section, empirical powers are evaluated to show the performance of our method based onhe two scenarios. First, we consider the models (4,.1) and (4.2) in \cite{li-2012} as the null and alternative hypotheses, respectively. Under (4.1),  the first population is set according to
$$X_{ij} = Z_{ij} + \theta_1 Z_{i j+1},$$ and under (4.2) the second population is set according to
$$Y_{ij} = Z_{ij} + \theta_1 Z_{i j+1}+ \theta_2 Z_{ij+2},$$
where $\theta_1=2$ and $\theta_2=1$.
 The second scenario is that two population covariance matrices are of the forms
\begin{align}
\small
\Sigmab_1 = \begin{pmatrix}
d_1 &\rho_1& \cdots& \rho_1^{p-2}& \rho_1^{p-1}  \\
\rho_1 & d_1 & \cdots& \rho_1^{p-3} & \rho_1^{p-2}\\
\vdots & \vdots  & \ddots &  \vdots& \vdots\\
 \rho_1^{p-2} &   \rho_1^{p-3}&  \cdots & d_1 & \rho_1\\
 \rho_1^{p-1} & \rho_1^{p-2} &  \cdots & \rho_1& d_1
 \end{pmatrix}_{p\times p}
\mbox{ and }
 \Sigmab_2 = \begin{pmatrix}
d_2 &\rho_2& \cdots& \rho_2^{p-2}& \rho_2^{p-1}  \\
\rho_2 & d_2 & \cdots& \rho_2^{p-3} & \rho_2^{p-2}\\
\vdots & \vdots  & \ddots &  \vdots& \vdots\\
 \rho_2^{p-2} &   \rho_2^{p-3}&  \cdots & d_2 & \rho_2\\
 \rho_2^{p-1} & \rho_2^{p-2} &  \cdots & \rho_2& d_2
 \end{pmatrix}_{p\times p}.
\end{align}
  We want to test whether these two covariance matrices are equal or not. Note that if $d_1 = d_2$, then the eigenvalues of $\Sigmab_1-\Sigmab_2$ sum to 0 or $\Sigmab_1-\Sigmab_2$ is singular.
We consider three cases such that the covariance matrices are positive definite: (i) $(d_1,\rho_1,d_2,\rho_2) =(1.2, 0.1 , 1 ,0.1)$, (ii) $(d_1,\rho_1,d_2,\rho_2) =( 1.5,0.5,1, 0.6)$ and  (iii) $(d_1,\rho_1,d_2,\rho_2) =( 1.1,0.2,1,0.24)$. In case (i), two covariance matrices differ in the diagonal. The two covariance matrices are entirely different by a smaller amount in case (ii) and by a larger amount in case (iii), indicating that the signals are weak and stronger, respectively.

The cut-off for the upper-tailed test statistics is the $100 \times (1-\alpha) $ percentile of $\max_{i\leq m}{Z_i}$. Since $Z_i's$ are independent identically distributed ( i.i.d.), the cut-off value at the level $\alpha$  can be given explicitly by $z_{1-(1-\alpha)^{1/m}}$. By symmetry, the cut-off value for the lower-tailed test statistic based on  $\min_{i\leq m}{Z_i}$ is equal to $-z_{1-(1-\alpha)^{1/m}}$. Table~\ref{T1} gives the upper-tailed cut-off values for various $m$ at different $\alpha$ levels.  Under the null hypothesis, we assume independent standard Gaussian.
Table~\ref{T1} gives the cut-off values for various $m$ and $\alpha$ values. The empirical sizes of the two-sample test are given in Table~\ref{T3} and it can be seen that the sizes are controlled fairly well for all $m\leq 1,000.$

\begin{table}[!t]
\caption{ The upper-tailed cut-off values  for various $m$ and significance levels $\alpha$.  }
\begin{center}\label{T1}
\begin{tabular}{@{}cccccccc@{}}
\hline$\alpha$  &$m=10$ & $m=100$ &  $m=1,000$        \\\hline
0.01 	&	 3.0889   & 3.7178  & 4.2638  	\\
0.025 &          2.8034   & 3.4774   & 4.0527   \\
0.05  &           2.5679   & 3.2834 &   3.8844  \\
0.1   	&	 2.3087   & 3.0748  &  3.7058  	\\
\hline
\end{tabular}
\end{center}
\end{table}

\begin{table}[!h]
\caption{ (One-sided) Empirical sizes for various $p$ and $m$.   }
\begin{center} \label{T3}
\begin{tabular}{@{}cccccccc@{}}
$n_1=n_2$  &  $p$  &$m=10$  & $m=100$  & $m=1,000$     \\\midrule
50&  32& 0.051& 0.065&0.0737\\
&  64&0.0473&0.0603&0.0713\\
& 128&0.0527& 0.058&0.0813\\
& 256& 0.057&0.0497&0.0827\\
& 512&0.0577&0.0583&0.0727\\
&1024&0.0547& 0.075&0.0737\\
&2048&0.0547&0.0657&0.0733\\
&4096&0.0547&0.0627& 0.068\\
100&  32&0.0497&0.0533&0.0593\\
&  64& 0.047&0.0627&0.0673\\
& 128&0.0517& 0.061&0.0623\\
& 256& 0.047&0.0543&0.0623\\
& 512& 0.061& 0.053&0.0677\\
&1024&0.0507& 0.063&0.0673\\
&2048&0.0513&0.0623&0.0627\\
&4096& 0.052& 0.063& 0.062\\
200&  32&0.0493&0.0537& 0.049\\
&  64&0.0517&0.0487&0.0507\\
& 128& 0.046&0.0497& 0.059\\
& 256&  0.05&0.0523&0.0583\\
& 512& 0.059&0.0507& 0.047\\
&1024&0.0497& 0.055& 0.056\\
&2048&0.0507& 0.061&0.0563\\
&4096& 0.055&0.0517& 0.058\\
\hline
\end{tabular}
\end{center}
\end{table}

 In Table~\ref{T0}, the random projection approach performs worse than CLRT and Li \& Chen's method, while in Tables~\ref{Ti}-\ref{Tiii} our method performs  better for all $m$ even for weak signal. We want to emphasize that the expected value of $\sigma_1-\sigma_2$ under the null hypothesis is equal to the sum of all eigenvalues of $\Sigmab_1 - \Sigmab_2$. Hence, the worst scenario for our method is when the difference of the two covariance matrices is singular, i.e. the sum of all the eigenvalues is equal to 0. In this case, the two projected data sets become indistinguishable in terms of their variances. On the contrary, our method is particularly suitable when the difference between two covariance matrices is a positive (negative) definite matrix. In the first scenario of our simulations, the difference of two covariance matrices is close to be singular and the differences of  $\Sigmab_1$ and $\Sigmab_2$ for the three cases in the second scenario are all positive definite matrices. This explains why our method performs well in the second scenario.

\begin{table}[!t]
\caption{Empirical powers for various $p$ and $m$ with models (4.1) and (4.2) in \cite{li-2012}.   }
\begin{center} \label{T0}
\begin{tabular}{@{}ccccccccccc@{}}
$n_1=n_2$  &  $p$  & CLRT  & Li \& Chen& $m=10$  & $m=100$  & $m=1,000$     \\\midrule
50&  32&1&0.7603&0.2503&0.3563&0.5143\\
&  64&&0.7983& 0.199&0.3827&0.5363\\
& 128&&0.8217& 0.208&0.3777&0.5597\\
& 256&&0.8377& 0.213&0.3673&0.5323\\
& 512&&0.8487&0.2327&0.3783& 0.546\\
&1024&&0.8503&0.2597&0.3717& 0.536\\
&2048&&0.8513&0.2477&0.3933&0.5387\\
&4096&&0.8497& 0.247&  0.38&0.5483\\
100&  32&1&0.9963&0.4567&0.6117&0.8163\\
&  64&1&0.9997& 0.344&0.6207& 0.845\\
& 128&&     1&0.3347& 0.635&0.8323\\
& 256&&0.9997&0.3487&0.6033&0.8303\\
& 512&&     1&0.3753&0.6197&0.8117\\
&1024&&0.9997&0.3697&0.6037&0.8047\\
&2048&&0.9997& 0.384&0.6103& 0.809\\
&4096&&     1&0.4107&0.5937& 0.808\\
200&  32&1&     1&0.7737&0.9227&0.9937\\
&  64&1&     1& 0.579&0.9203&0.9953\\
& 128&1&     1& 0.555& 0.915&0.9957\\
& 256&&     1& 0.553& 0.915& 0.996\\
& 512&&     1&0.5973&  0.92& 0.994\\
&1024&&     1&0.6167&0.9047&0.9963\\
&2048&&     1& 0.625&0.9133& 0.996\\
&4096&&     1&0.6493&0.9133&0.9927\\
\hline
\end{tabular}
\end{center}
\end{table}

 \begin{table}[!t]
\caption{Empirical powers for various $p$ and $m$ with case (i).   }
\begin{center} \label{Ti}
\begin{tabular}{@{}ccccccccccc@{}}
$n_1=n_2$  &  $p$  & CLRT  & Li \& Chen& $m=10$  & $m=100$  & $m=1,000$     \\\midrule
50&  32&0.0750&  0.11& 0.248& 0.356& 0.474\\
&  64&&0.1213& 0.238&0.3657& 0.508\\
& 128&&0.1067&0.2793&0.3753&0.5103\\
& 256&& 0.116& 0.258& 0.373&0.5263\\
& 512&& 0.109&0.2617&0.3727&0.5223\\
&1024&& 0.105&0.2573&0.3877&0.5417\\
&2048&& 0.097& 0.264&0.3827&0.5403\\
&4096&&0.1013&0.2447&0.3687&0.5287\\
100&  32&0.1830&0.2033&  0.37&0.5553& 0.705\\
&  64&0.0960& 0.199&0.3877&0.5787&0.7563\\
& 128&&0.2023&0.3967&0.5907&0.7883\\
& 256&&0.2043& 0.413&0.6113&0.7937\\
& 512&&0.2067&0.4007&0.5983&0.7953\\
&1024&& 0.188&0.3913& 0.602& 0.786\\
&2048&& 0.202&0.4047&0.6153&0.7973\\
&4096&&0.1973& 0.395& 0.596&0.7973\\
200&  32&0.4400& 0.505& 0.609&0.8397&0.9463\\
&  64&0.3140& 0.486&  0.63& 0.882&0.9777\\
& 128&0.2070& 0.474&0.6587&  0.89&0.9863\\
& 256&&0.4927&0.6593& 0.895&0.9887\\
& 512&& 0.477& 0.646&0.9067&0.9903\\
&1024&&0.4807&0.6497&0.8997&0.9933\\
&2048&& 0.476&0.6457&0.9113&0.9923\\
&4096&& 0.487&0.6427&  0.91& 0.992\\
\hline
\end{tabular}
\end{center}
\end{table}

 \begin{table}[!t]
\caption{Empirical powers for various $p$ and $m$ with case (ii).   }
\begin{center} \label{Tii}
\begin{tabular}{@{}ccccccccccc@{}}
$n_1=n_2$  &  $p$  & CLRT  & Li \& Chen& $m=10$  & $m=100$  & $m=1,000$     \\\midrule
50&  32&0.983&0.4773&0.7993&0.9757&0.9983\\
&  64&&  0.49&0.8113&0.9773&0.9997\\
& 128&&0.5017&0.8097&0.9707&     1\\
& 256&&0.4933&0.8283&0.9723&     1\\
& 512&&0.5007&0.7983&0.9737&0.9993\\
&1024&&  0.48& 0.763& 0.966&0.9997\\
&2048&&0.4887& 0.748&0.9633&     1\\
&4096&&0.4893&0.7447&0.9623&0.9993\\
100&  32&1& 0.935&0.9827&     1&     1\\
&  64&1&0.9437&0.9877&     1&     1\\
& 128&&0.9473& 0.984&     1&     1\\
& 256&& 0.946&0.9897&     1&     1\\
& 512&& 0.942&0.9777&     1&     1\\
&1024&& 0.948& 0.973&     1&     1\\
&2048&&0.9437&0.9687&     1&     1\\
&4096&& 0.949&0.9687&     1&     1\\
200&  32&1&     1&     1&     1&     1\\
&  64&1&     1&     1&     1&     1\\
& 128&1&     1&     1&     1&     1\\
& 256&&     1&     1&     1&     1\\
& 512&&     1&     1&     1&     1\\
&1024&&     1&     1&     1&     1\\
&2048&&     1&0.9997&     1&     1\\
&4096&&     1&0.9997&     1&     1\\
\hline
\end{tabular}
\end{center}
\end{table}

 \begin{table}[!t]
\caption{Empirical powers for various $p$ and $m$ with case (iii).   }
\begin{center} \label{Tiii}
\begin{tabular}{@{}ccccccccccc@{}}
$n_1=n_2$  &  $p$  & CLRT  & Li \& Chen& $m=10$  & $m=100$  & $m=1,000$     \\\midrule
50&  32&0.0880&0.0687&0.1373&0.1677&0.2283\\
&  64&&0.0743&0.1397&0.1773&0.2423\\
& 128&&0.0643&0.1403& 0.172&0.2313\\
& 256&& 0.071&0.1313&0.1737&0.2377\\
& 512&&0.0667&0.1407& 0.186& 0.235\\
&1024&& 0.066&0.1307& 0.186&0.2437\\
&2048&&  0.06& 0.144&0.1773&0.2413\\
&4096&& 0.063&  0.14&0.1733& 0.244\\
100&  32&0.1070&0.0927&0.1887& 0.228&0.3013\\
&  64&0.0810& 0.083&0.1953&0.2343&0.3233\\
& 128&&0.0827& 0.174& 0.236&0.3083\\
& 256&& 0.083&0.1723& 0.235& 0.323\\
& 512&&0.0873&0.1643& 0.232& 0.307\\
&1024&&  0.08&0.1753& 0.236& 0.306\\
&2048&& 0.089&0.1817& 0.239& 0.311\\
&4096&&0.0813&0.1813&0.2433&0.3113\\
200&  32&0.2650&0.1617&0.3013&0.3787&0.4847\\
&  64&0.2040&0.1407&0.2897& 0.381&0.4947\\
& 128&0.1330&0.1327& 0.279&0.3803&  0.52\\
& 256&& 0.152&0.2567& 0.377& 0.499\\
& 512&&0.1457&0.2613& 0.382&0.4947\\
&1024&& 0.137&0.2377&0.3723& 0.503\\
&2048&&0.1453& 0.259& 0.358&0.4943\\
&4096&& 0.154&0.2513& 0.359& 0.502\\
\hline
\end{tabular}
\end{center}
\end{table}

\vspace{0.1in}
\setcounter{chapter}{5}
\setcounter{equation}{0}
\noindent {\bf 5. Summary and discussion}

We present a simple method to test the high-dimensional covariance matrices for both one-sample and two-sample cases using random projection. In general, a random projection method is to project data from $\mathbb{R}^p$ to $\mathbb{R}^k$, $1\leq k<n$. In this paper, we focus on $k=1.$ The reason we opt to use $k=1$ is that the random projection method only processes the one-dimensional data so that it is very efficient in computation and easy to use.  In addition, the interpretation is simple. We then illustrate our method based on the likelihood ratio test statistics and derive the asymptotic normality for the null distributions.  The asymptotic results hold when both $n$ and $p$ go to infinity, and there is no relationship required between $n$ and $p$. However, by adding some minor conditions, we can obtain the approximate convergence rate of the covariance between different random projections. Finally, simulations are conducted to compare our method with \cite{li-2012}'s method and CLRT introduced by \cite{Bai-et-al-2009}. Surprisingly, our method performs very well in a certain class of covariance matrices. The derivation and  numerical results show that the random projection method is advantageous when the difference between two covariance is almost positive definite (negative) and disadvantageous when the difference is almost singular. By almost positive (negative) definite we mean most of the eigenvalues are positive (negative) or the sum of the eigenvalues is large (small), and by almost singular we mean the sum of the eigenvalues is close to zero. If the difference of two covariance matrices is almost positive (negative) definite, then the strength of the signal (difference) is well-preserved by random projection onto the one-dimensional space. If the difference of two covariance matrices is singular, then the signal is then completely masked by random projection. For example, our method is not suitable for testing two covariance matrices having the same diagonal. In such a case, no matter how large the signals are on the off-diagonal entries, the difference of the two variances of the projected data is zero on average.



\vskip 14pt
\noindent {\large\bf Acknowledgement}\\
 
The work was partially supported by NSF-III-1360971, NSF-Bigdata-1419210, ONRN00014-13-1-0764, and AFOSR-FA9550-13-1-0137. The work was completed while the the first author was a postdoctoral researcher at Rutgers University in 2013-2014.

\par

\markboth{\hfill{\footnotesize\rm Tung-Lung Wu AND Ping Li} \hfill}
{\hfill {\footnotesize\rm  High-Dimensional Covariance Matrices} \hfill}


\begin{thebibliography}{10}

\bibitem{Bai-1993}
Z.~D. Bai.
\newblock Convergence rate of expected spectral distributions of large random
  matrices. {II}. {S}ample covariance matrices.
\newblock {\em Ann. Probab.}, 21(2):649--672, 1993.

\bibitem{Bai-1993a}
Z.~D. Bai and Y.~Q. Yin.
\newblock Limit of the smallest eigenvalue of a large-dimensional sample
  covariance matrix.
\newblock {\em Ann. Probab.}, 21(3):1275--1294, 1993.

\bibitem{Bai-et-al-2009}
Zhidong Bai, Dandan Jiang, Jian-Feng Yao, and Shurong Zheng.
\newblock Corrections to {LRT} on large-dimensional covariance matrix by {RMT}.
\newblock {\em Ann. Statist.}, 37(6B):3822--3840, 2009.

\bibitem{Bartlett-Kendall-1946}
M.~S. Bartlett and D.~G. Kendall.
\newblock The statistical analysis of variance-heterogeneity and the
  logarithmic transformation.
\newblock {\em Suppl. J. Roy. Statist. Soc.}, 8:128--138, 1946.

\bibitem{Levina-2008b}
Peter~J. Bickel and Elizaveta Levina.
\newblock Covariance regularization by thresholding.
\newblock {\em Ann. Statist.}, 36(6):2577--2604, 2008.

\bibitem{Levina-2008a}
Peter~J. Bickel and Elizaveta Levina.
\newblock Regularized estimation of large covariance matrices.
\newblock {\em Ann. Statist.}, 36(1):199--227, 2008.

\bibitem{chen-2010}
Song~Xi Chen, Li-Xin Zhang, and Ping-Shou Zhong.
\newblock Tests for high-dimensional covariance matrices.
\newblock {\em J. Amer. Statist. Assoc.}, 105(490):810--819, 2010.

\bibitem{fang-1990}
Kai~Tai Fang, Samuel Kotz, and Kai~Wang Ng.
\newblock {\em Symmetric multivariate and related distributions}, volume~36 of
  {\em Monographs on Statistics and Applied Probability}.
\newblock Chapman and Hall, Ltd., London, 1990.

\bibitem{fisher-1973}
Ronald~A. Fisher.
\newblock {\em Statistical methods for research workers}.
\newblock Hafner Publishing Co., New York, 1973.
\newblock Fourteenth edition---revised and enlarged.

\bibitem{Harvey-1965}
James~Raymond Harvey.
\newblock {\em Fractional moments of a quadratic form in noncentral normal
  random variables}.
\newblock ProQuest LLC, Ann Arbor, MI, 1965.
\newblock Thesis (Ph.D.)--North Carolina State University.

\bibitem{Jiang-2012}
Dandan Jiang, Tiefeng Jiang, and Fan Yang.
\newblock Likelihood ratio tests for covariance matrices of high-dimensional
  normal distributions.
\newblock {\em J. Statist. Plann. Inference}, 142(8):2241--2256, 2012.

\bibitem{Jiang-2013}
Tiefeng Jiang and Fan Yang.
\newblock Central limit theorems for classical likelihood ratio tests for
  high-dimensional normal distributions.
\newblock {\em Ann. Statist.}, 41(4):2029--2074, 2013.

\bibitem{Johnson-1984}
William~B. Johnson and Joram Lindenstrauss.
\newblock Extensions of lipschitz mappings into a hilbert space.
\newblock {\em Contemp. Math.}, 26:189--206, 1984.

\bibitem{Jonsson-1982}
Dag Jonsson.
\newblock Some limit theorems for the eigenvalues of a sample covariance
  matrix.
\newblock {\em J. Multivariate Anal.}, 12(1):1--38, 1982.

\bibitem{li-2012}
Jun Li and Song~Xi Chen.
\newblock Two sample tests for high-dimensional covariance matrices.
\newblock {\em Ann. Statist.}, 40(2):908--940, 2012.

\bibitem{li-2006}
Ping Li, Trevor~J. Hastie, and Kenneth~W. Church.
\newblock Very sparse random projections.
\newblock In {\em Proceedings of the 12th ACM SIGKDD International Conference
  on Knowledge Discovery and Data Mining}, KDD '06, pages 287--296, New York,
  NY, USA, 2006. ACM.

\bibitem{LiOwenZhang-2012}
Ping Li, Art Owen, and Cun-Hui Zhang.
\newblock One permutation hashing.
\newblock In P.~Bartlett, F.C.N. Pereira, C.J.C. Burges, L.~Bottou, and K.Q.
  Weinberger, editors, {\em Advances in Neural Information Processing Systems
  25}, pages 3122--3130. 2012.

\bibitem{Lopes-2014}
Miles Lopes, Laurent Jacob, and Martin Wainwright.
\newblock A more powerful two-sample test in high dimensions using random
  projection.
\newblock Master's thesis, EECS Department, University of California, Berkeley,
  Apr 2014.

\bibitem{Shoemaker-2003}
Lewis~H. Shoemaker.
\newblock Fixing the {$F$} test for equal variances.
\newblock {\em Amer. Statist.}, 57(2):105--114, 2003.

\bibitem{Li-2015}
Radhendushka Srivastava, Ping Li, and David Ruppert.
\newblock \text{RAPTT}: An exact two-sample test in high dimensions using
  random projections.
\newblock {\em Journal of Computational and Graphical Statistics}, 2015.

\bibitem{van-1996}
Aad~W. van~der Vaart and Jon~A. Wellner.
\newblock {\em Weak convergence and empirical processes}.
\newblock Springer Series in Statistics. Springer-Verlag, New York, 1996.
\newblock With applications to statistics.

\bibitem{bala-2013}
Joong-Ho Won, Johan Lim, Seung-Jean Kim, and Bala Rajaratnam.
\newblock Condition-number-regularized covariance estimation.
\newblock {\em J. R. Stat. Soc. Ser. B. Stat. Methodol.}, 75(3):427--450, 2013.

\end{thebibliography}

\vskip .65cm
\noindent
Rutgers University
\vskip 2pt
\noindent
E-mail: (wutunglung@gmail.com)
\vskip 2pt

\noindent
Rutgers University
\vskip 2pt
\noindent
E-mail: (pingli@stat.rutgers.edu)
\end{document}